\newif\ifsubmission
\submissiontrue

\ifsubmission
\documentclass[runningheads,a4paper]{llncs}
\pagestyle{plain}
\usepackage{geometry}
\geometry{
  a4paper,         
  textwidth=15cm,  
  textheight=24cm, 
  heightrounded,   
  hratio=1:1,      
  vratio=2:3,      
}
\else
\documentclass[11pt]{article}
\usepackage[t,lf]{spectral}
 
\usepackage[T1]{fontenc}
\usepackage{fullpage}
\fi

\newif\iffull
\fulltrue

\usepackage{cmap} 
\usepackage[pdfstartview=FitH,pdfpagemode=None,colorlinks,linkcolor=blue,filecolor=blue,citecolor=Violet,urlcolor=red]{hyperref}
\usepackage{multirow}
\usepackage{float}
\usepackage[section]{placeins}         
\usepackage{mathtools}  
\usepackage[dvipsnames]{xcolor}
\usepackage{amsmath,amssymb,algpseudocode,algorithm,cryptocode}\ifsubmission\else\usepackage{amsthm}\fi
\usepackage{mathrsfs} 
\usepackage[capitalize]{cleveref}
\usepackage{xspace}  
\usepackage{braket}
\usepackage{comment}
\usepackage{hhline}
\usepackage{tablefootnote}
\usepackage{tikz-cd}
\usepackage{url}
\usepackage{bbm}

\newif\ifnotes
\notestrue


\ifsubmission
\else
\newtheorem{theorem}{Theorem}[section]
\newtheorem{importedtheorem}[theorem]{Imported Theorem}

\newtheorem{claim}[theorem]{Claim}

\newtheorem{definition}[theorem]{Definition}

\newtheorem{remark}[theorem]{Remark}

\theoremstyle{remark}
\fi

\newtheorem{importedtheorem}[theorem]{Imported Theorem}

\newtheorem{construction}[theorem]{Construction}
\Crefname{importedtheorem}{Imported Theorem}{Imported Theorems}
\Crefname{theorem}{Theorem}{Theorems}
\Crefname{proposition}{Proposition}{Propositions}
\Crefname{claim}{Claim}{Claims}
\Crefname{lemma}{Lemma}{Lemmas}
\Crefname{conjecture}{Conjecture}{Conjectures}
\Crefname{corollary}{Corollary}{Corollaries}
\Crefname{construction}{Construction}{Constructions}
\Crefname{property}{Property}{Properties}

\Crefname{definition}{Definition}{Definitions}
\Crefname{assumption}{Assumption}{Assumptions}
\Crefname{notation}{Notation}{Notations}

\Crefname{question}{Question}{Questions}
\Crefname{remark}{Remark}{Remarks}
\Crefname{comment}{Comment}{Comments}
\Crefname{fact}{Fact}{Facts}



\newcommand{\secp}{\lambda}



\def\cA{{\cal A}}

\def\cD{{\cal D}}

\def\cH{{\cal H}}

\def\cL{{\cal L}}

\def\cZ{{\cal Z}}

\def\sA{{\mathsf A}}
\def\sB{{\mathsf B}}
\def\sC{{\mathsf C}}
\def\sD{{\mathsf D}}

\def\sR{{\mathsf R}}

\def\sX{{\mathsf X}}



\def\bbF{{\mathbb F}}

\def\bbI{{\mathbb I}}

\def\bbN{{\mathbb N}}





\newcommand{\N}{\bbN}



\def\poly{{\rm poly}}

\def\negl{{\rm negl}}



\newcommand{\pk}{\mathsf{pk}}
\newcommand{\sk}{\mathsf{sk}}

\newcommand{\Com}{\mathsf{Com}}

\newcommand{\Enc}{\mathsf{Enc}}
\newcommand{\Dec}{\mathsf{Dec}}

\newcommand{\ct}{\mathsf{ct}}





















\newcommand{\Gen}{\mathsf{Gen}}

\newcommand{\vk}{\mathsf{vk}}









\newenvironment{boxfig}[2]{\begin{figure}[#1]\fbox{
    \begin{minipage}{\linewidth}
    \vspace{0.2em}\makebox[0.025\linewidth]{}    \begin{minipage}{0.95\linewidth}{{#2 }}
    \end{minipage}\vspace{0.2em}\end{minipage}}}{\end{figure}}







\newcommand{\nonnegl}{\mathsf{non}\text{-}\mathsf{negl}}

\newcommand{\Del}{\mathsf{Del}}

\newcommand{\TD}{\mathsf{TD}}

\newcommand{\Hyb}{\mathsf{Hyb}}
\newcommand{\Advt}{\mathsf{Advt}}
\newcommand{\Tr}{\mathsf{Tr}}

\newcommand{\Vrfy}{\mathsf{Vrfy}}
\newcommand{\PV}{\mathsf{PV}}

\begin{document}
\def\doi#1{\url{https://doi.org/#1}}

\title{Weakening 
Assumptions for Publicly-Verifiable Deletion}
\author{James Bartusek\thanks{UC Berkeley. Email: \texttt{bartusek.james@gmail.com}} \
\and Dakshita Khurana\thanks{UIUC. Email: \texttt{dakshita@illinois.edu}} \ 
\and Giulio Malavolta\thanks{Bocconi University \& Max Planck Institute for Security and Privacy. Email: \texttt{giulio.malavolta@hotmail.it}} \
\and Alexander Poremba\thanks{Caltech. Email: \texttt{aporemba@caltech.edu}} \ \and Michael Walter\thanks{Ruhr-Universität Bochum. \texttt{michael.walter@rub.de}}}
\date{}
\institute{}
\authorrunning{Bartusek et al.}
\index{Bartusek, James}
\index{Khurana, Dakshita}
\index{Malavolta, Giulio}
\index{Poremba, Alexander}
\index{Walter, Michael}
\maketitle

\begin{abstract}

We develop a simple compiler that generically adds publicly-verifiable deletion to a variety of cryptosystems. Our compiler only makes use of one-way functions (or one-way state generators, if we allow the public verification key to be quantum). Previously, similar compilers either relied on indistinguishability obfuscation along with any one-way function (Bartusek et. al., ePrint:2023/265), or on almost-regular one-way functions (Bartusek, Khurana and Poremba, CRYPTO 2023). 
\end{abstract}


\section{Introduction}

Is it possible to \emph{provably} delete information by leveraging the laws of quantum mechanics? An exciting series of recent works~\cite{Unruh2013,Broadbent_2020,hiroka2021quantum,cryptoeprint:2022/969,hiroka2021certified,Poremba22,BK23,BGGKMRR,10.1007/978-3-031-30545-0_20,cryptoeprint:2023/325,BKP23} have built a variety of quantum cryptosystems that support certifiable deletion of plaintext data and/or certifiable revocation of ciphertexts or keys.
. The notion of certified deletion was formally introduced by Broadbent and Islam~\cite{Broadbent_2020} for the one-time pad, where once the certificate is successfully verified, the plaintext remains hidden even if the secret (one-time pad) key is later revealed. This work has inspired a large body of research, aimed at understanding what kind of cryptographic primitives can be certifiably deleted. Recently,~\cite{BK23} built a compiler that generically adds the certified deletion property described above to any computationally secure commitment, encryption, attribute-based encryption, fully-homomorphic encryption, witness encryption or timed-release encryption scheme, {\em without making any additional assumptions}. Furthermore, it provides a strong {\em information-theoretic} deletion guarantee: Once an adversary generates a valid (classical) certificate of deletion, they cannot recover the plaintext that was previously computationally determined by their view even given {\em unbounded time}. However, the compiled schemes satisfy privately verifiable deletion -- namely, the encryptor generates a ciphertext together with secret parameters which are necessary for verification and must be kept hidden from the adversary.



\paragraph{Publicly Verifiable Deletion.} The above limitation was recently overcome in~\cite{BGGKMRR}, which obtained {\em publicly-verifiable} deletion (PVD) for all of the above primitives as well as new ones, such as CCA encryption, obfuscation, maliciously-secure blind delegation and functional encryption\footnote{A concurrent updated version of~\cite{cryptoeprint:2022/969} also obtained functional encryption with certified deletion, although in the private-verification settings.}. However, the compilation process proposed in~\cite{BGGKMRR} required the strong notion of indistinguishability obfuscation, regardless of what primitive one starts from. This was later improved in~\cite{BKP23}, which built commitments with PVD from injective (or almost-regular) one-way functions, and $X$ encryption with PVD for $X \in \{\text{attribute-based},\allowbreak \text{fully-homomorphic}, \allowbreak\text{witness}, \allowbreak\text{timed-release}\}$, assuming $X$ encryption and trapdoored variants of injective (or almost-regular) one-way functions.


\paragraph{Weakening Assumptions for PVD.}
Given this state of affairs, it is natural to ask whether one can further relax the assumptions underlying publicly verifiable deletion, essentially matching what is known in the private verification setting.
In this work, we show that the injectivity/regularity constraints on the one-way functions from prior work~\cite{BKP23} are not necessary to achieve publicly-verifiable deletion; {\em any} one-way function suffices, or even a quantum weakening called a one-way state generator (OWSG)~\cite{C:MorYam22} if we allow the verification key to be quantum. Kretschmer~\cite{Kretschmer} showed that, relative to an oracle, pseudorandom state generators (PRSGs)~\cite{C:JiLiuSon18,C:MorYam22} exist even if $\mathsf{BQP} = \mathsf{QMA}$ (and thus $\mathsf{NP} \subseteq \mathsf{BQP}$). Because PRSGs are known to imply OWSGs~\cite{C:MorYam22}, this allows us to base our generic compiler for PVD on something potentially even weaker than the existence of one-way functions.w

In summary, we improve~\cite{BKP23} to obtain
$X$ with PVD for $X \in \{\text{statistically-binding } \allowbreak \text{commitment},\allowbreak\text{public-key encryption},\allowbreak\text{attribute-based encryption},\allowbreak\text{fully-homomorphic } \allowbreak\text{encryption}, \allowbreak\text{witness encryption},\allowbreak \text{timed-release encryption}\}$, assuming only $X$ and any one-way function. We also obtain $X$ with PVD for all the $X$ above, assuming only $X$ and any one-way state generator~\cite{C:MorYam22}, but with a {\em quantum} verification key. Our primary contribution is conceptual: Our construction is inspired by a recent work on quantum-key distribution~\cite{MW23}, which we combine with a proof strategy that closely mimics~\cite{BGGKMRR,BKP23} (which in turn build on the proof technique of~\cite{BK23}).

\subsection{Technical Outline}

\paragraph{Prior approach.} We begin be recalling that prior work~\cite{BKP23} observed that, given an appropriate {\em two-to-one} one-way function $f$,
a commitment (with certified deletion) to a bit $b$ can be
\[\mathsf{ComCD}(b) \propto \left( y, \ket{x_0} + (-1)^b \ket{x_1} \right) \]
where $(0,x_0), (1,x_1)$ are the two pre-images of (a randomly sampled) image $y$.
Given an image $y$ and a quantum state $\ket{\psi}$, they showed that any pre-image of $y$ constitutes a valid certificate of deletion of the bit~$b$. This certificate can be obtained by measuring the state $\ket{\psi}$ in the computational basis.

Furthermore, it was shown in~\cite{BKP23} that in fact two-to-one functions are not needed to instantiate this template, it is possible to use more general types of one-way functions to obtain a commitment of the form
\[\mathsf{ComCD}(b) \propto \left( y, \sum_{x:f(x) = y, M(x)
= 0} \ket{x} + (-1)^b \sum_{x:f(x) = y, M(x)
= 1} \ket{x} \right). \]
where $M$ denotes some binary predicate applied to the preimages of $y$. The work of~\cite{BKP23} developed techniques to show that this satisfies certified deletion, as well as binding as long as the sets \[ \sum_{x: f(x) = y, M(x) = 0} \ket{x} \,\,\text{ and } \sum_{x: f(x) = y, M(x) = 1} \ket{x} \]
are somewhat ``balanced'', i.e. for a random image $y$ and the sets $S_0 = \{x: f(x) = y, M(x) = 0\}$ and $S_1 = \{x: f(x) = y, M(x) = 1\}$, it holds that $\frac{|S_0|}{|S_1|}$ is a fixed constant.
Such ``balanced'' functions can be obtained from injective (or almost-regular) one-way functions by a previous result of~\cite{balancedOWF}.


\paragraph{Using any one-way function.} Our first observation is that it is not necessary to require $x_0, x_1$ to be preimages of the same image $y$. Instead, we can modify the above template to use randomly sampled $x_0 \neq x_1$ and compute $y_0 = F(x_0), y_1 = F(x_1)$ to obtain
\[\mathsf{ComCD}(b) \propto \left( (y_0, y_1), \ket{x_0} + (-1)^b \ket{x_1} \right)\]
Unfortunately, as described so far, the phase $b$ may not be statistically fixed by the commitment when $F$ is a general one-way function, since if $F$ is not injective, the $y_0,y_1$ do not determine the choice of $x_0,x_1$ that were used to encrypt the phase. To restore binding, we can simply append a commitment to $(x_0 \oplus x_1)$ to the state above, resulting in
\[\mathsf{ComCD}(b) \propto \left( (y_0, y_1), \Com(x_0 \oplus x_1), \ket{x_0} + (-1)^b \ket{x_1} \right)\]
Assuming that $\Com$ is statistically binding, the bit $b$ is (statistically) determined by the commitment state above, and in fact, can even be efficiently determined given $x_0 \oplus x_1$. This is because a measurement of $\ket{x_0} + (-1)^b \ket{x_1}$ in the Hadamard basis yields a string $z$ such that $b = (x_0 \oplus x_1) \cdot z$.

\paragraph{Relation to~\cite{BGGKMRR}.} In fact, one can now view this scheme as a particular instantiation of the subspace coset state based compiler from \cite{BGGKMRR}. To commit to a bit $b$ using the compiler of \cite{BGGKMRR}, we would sample (i) a random subspace $S$ of $\bbF_2^n$, (ii) a random coset of $S$ represented by a vector $v$, and (iii) a random coset of $S^\bot$ represented by a vector $w$. Then, the commitment would be \[\mathsf{ComCD}(b) = \Com(S),\ket{S_{v,w}},b \oplus \bigoplus_i v_i,\] where $\ket{S_{v,w}} \propto \sum_{s \in S} (-1)^{s \cdot w}\ket{s+v}$ is the subspace coset state defined by $S,v,w$. A valid deletion certificate would be any vector in $S^\bot + w$, obtained by measuring $\ket{S_{v,w}}$ in the Hadamard basis. 

However, in order to obtain publicly-verifiable deletion, \cite{BGGKMRR} publish an obfuscated membership check program for $S^\bot + w$, which is general requires post-quantum indistinguishability obfuscation. Our main observation here is that we can sample $S$ as an $(n-1)$-dimensional subspace, which means that $S^\bot + w$ will only consist of two vectors. Then, to obfuscate a membership check program for $S^\bot + w$, it suffices to publish a one-way function evaluated at each of the two vectors in $S^\bot + w$, which in our notation are $x_0$ and $x_1$. 

To complete the derivation of our commitment scheme, note that to describe $S$, it suffices to specify the hyperplane that defines $S$, which in our notation is $x_0 \oplus x_1$. Finally, we can directly encode the bit $b$ into the subspace coset state rather than masking it with the description of a random coset (in our case, there are only two cosets of $S$), and if we look at the resulting state in the Hadamard basis, we obtain $\propto \ket{x_0} + (-1)^b \ket{x_1}$.

\paragraph{Proving security.} Naturally, certified deletion security follows by adapting the proof technique from~\cite{BGGKMRR}, as we discuss now. Recall that we will consider an experiment where the adversary is given an encryption of $b$ and outputs a deletion certificate. If the certificate is valid, the output of the experiment is defined to be the adversary's left-over state (which we will show to be independent of $b$), otherwise the output of the experiment is set to $\bot$. 

We will consider a sequence of hybrid experiments to help us prove that the adversary's view is statistically independent of $b$ when their certificate verifies. The first step is to defer the dependence of the experiment on the bit $b$.
In more detail, we will instead imagine sampling
the distribution by guessing a uniformly random $c \leftarrow \{0,1\}$, and initializing the adversary with the following: $\left( (y_0, y_1), \Com(x_0 \oplus x_1), \ket{x_0} + (-1)^c \ket{x_1} \right)$.
The challenger later obtains input $b$ and aborts the experiment (outputs $\bot$) if $c \neq b$.
Since $c$ was a uniformly random guess, the trace distance between the $b = 0$ and $b = 1$ outputs of this modified experiment is at least half the trace distance between the outputs of the original experiment. Moreover, we can actually consider a {\em purification} of this experiment where a register $\mathsf{C}$ is initialized in a superposition $\ket{0} + \ket{1}$ of two choices for~$c$, and is later measured to determine the bit~$c$.

Now, we observe that the joint quantum state of the challenger and adversary can be written as \[\frac{1}{2}\sum_{c \in \{0,1\}}\ket{c}_\sC \otimes \left(\ket{x_0} + (-1)^{c}\ket{x_1}\right)_\sA = \frac{1}{\sqrt{2}}\left(\ket{+}_\sC\ket{x_0}_\sA + \ket{-}_\sC\ket{x_1}_\sA\right),\] where the adversary is initialized with the register $\sA$. Intuitively, if the adversary returns a successful deletion certificate $x$ such that $F(x) = y_{c'}$ for bit $c'$, then they must have done this by measuring in the standard basis and collapsing the joint state to $Z^{c'}\ket{+}_\sC\ket{x_{c'}}_\sA$. We can formalize this intuition by introducing an extra abort condition into the experiment. That is, if the adversary returns some $x$ such that $F(x) = y_{c'}$, the challenger will then measure their register in the Hadamard basis and abort if the result $c'' \neq c'$. By the one-wayness of $F$, we will be able to show that no adversary can cause the challenger to abort with greater than $\negl(\secp)$ probability as a result of this measurement. This essentially completes the proof of our claim, because at this point the bit $c$ is always obtained by measuring a Hadamard basis state in the standard basis, resulting in a uniformly random bit outcome that completely masks the dependence of the experiment on $b$.

\paragraph{Applications.} Finally, we note that encryption with PVD can be obtained similarly by committing to each bit of the plaintext as
\[\mathsf{EncCD}(b) \propto \left( (y_0, y_1), \Enc(x_0 \oplus x_1), \ket{x_0} + (-1)^b \ket{x_1} \right)\]
We also note that, following prior work~\cite{BK23}, a variety of encryption schemes (e.g., ABE, FHE, witness encryption) can be plugged into the template above, replacing $\Enc$ with the encryption algorithm of ABE/FHE/witness encryption, yielding the respective schemes with publicly-verifiable deletion.

\subsection{Concurrent and Independent Work}

A concurrent work of Kitagawa, Nishimaki, and Yamakawa~\cite{KNY23} obtains similar results on publicly-verifiable deletion from one-way functions. Similarly to our work, they propose a generic compiler to obtain $X$ with publicly-verifiable deletion only assuming $X$ plus one-way functions, for a variety of primitives, such as commmitments, quantum fully-homomorphic encryption, attribute-based encryption, or witness encryption. One subtle difference, is that they need to assume the existence of \emph{quantum} fully-homomorphic encryption (QFHE), even for building \emph{classical} FHE with $\mathsf{PVD}$, due to the evaluation algorithm computing over a quantum state. On the other hand, we obtain FHE with $\mathsf{PVD}$ using only plain FHE. At a technical level, their approach is based on one-time signatures for BB84 states, whereas our approach can (in retrospect) be thought of as using one-time signatures on the $\ket{+}$ state.

Differently from our work,~\cite{KNY23} shows that their compiler can be instantiated from \emph{hard quantum planted problems for NP}, whose existence is \emph{implied} by most cryptographic primitives with $\mathsf{PVD}$. In this sense, their assumptions can be considered minimal. Although we do not explore this direction in our work, we believe that a similar implication holds for our compiler as well. On the other hand, we propose an additional compiler, whose security relies solely on one-way state generators (OWSG), which is an assumption conjectured to be even \emph{weaker} than one-way function.
\section{Preliminaries}

Let $\secp$ denote the security parameter. We write $\negl(\cdot)$ to denote any \emph{negligible} function, which is a function $f$ such that for every constant $c \in \mathbb{N}$ there exists $N \in \mathbb{N}$ such that for all $n > N$, $f(n) < n^{-c}$.

A finite-dimensional complex Hilbert space is denoted by $\cH$, and we use subscripts to distinguish between different systems (or registers); for example, we let $\cH_{\sA}$ be the Hilbert space corresponding to a system $\sA$. 
The tensor product of two Hilbert spaces $\cH_{\sA}$ and $\cH_{\sB}$ is another Hilbert space denoted by $\cH_{\sA\sB} = \cH_{\sA} \otimes \cH_{\sB}$.  We let $\cL(\cH)$ denote the set of linear operators over $\cH$. A quantum system over the $2$-dimensional Hilbert space $\cH = \mathbb{C}^2$ is called a \emph{qubit}. For $n \in \mathbb{N}$, we refer to quantum registers over the Hilbert space $\cH = \big(\mathbb{C}^2\big)^{\otimes n}$ as $n$-qubit states. We use the word \emph{quantum state} to refer to both pure states (unit vectors $\ket{\psi} \in \cH$) and density matrices $\rho \in \cD(\cH)$, where we use the notation $\cD(\cH)$ to refer to the space of positive semidefinite linear operators of unit trace acting on $\cH$. 
The \emph{trace distance} of two density matrices $\rho,\sigma \in \mathcal{D}(\mathcal{H)}$ is given by
$$
\TD(\rho,\sigma) = \frac{1}{2} \Tr\left[ \sqrt{ (\rho - \sigma)^\dag (\rho - \sigma)}\right].
$$

A quantum channel $\Phi:  \cL(\cH_{\sA}) \rightarrow \cL(\cH_{\sB})$ is a linear map between linear operators over the Hilbert spaces $\cH_{\sA}$ and $\cH_{\sB}$. We say that a channel $\Phi$ is \emph{completely positive} if, for a reference system $\sR$ of arbitrary size, the induced map $I_\sR \otimes \Phi$ is positive, and we call it \emph{trace-preserving} if $\Tr[\Phi(X)] = \Tr[X]$, for all $X\in \cL(\cH)$. A quantum channel that is both completely positive and trace-preserving is called a quantum CPTP channel. 

A \emph{unitary} $U: \cL(\cH_{\sA}) \to \cL(\cH_{\sA})$ is a special case of a quantum channel that satisfies $U^\dagger U = U U^\dagger = I_\sA$. A \emph{projector} $\Pi$ is a Hermitian operator such that $\Pi^2 = \Pi$, and a \emph{projective measurement} is a collection of projectors $\{\Pi_i\}_i$ such that $\sum_i \Pi_i = I$.

A quantum polynomial-time (QPT) machine is a polynomial-time family of quantum circuits given by $\{\cA_\secp\}_{\secp \in \mathbb{N}}$, where each circuit $\cA_\secp$ is described by a sequence of unitary gates and measurements; moreover, for each $\secp \in \mathbb{N}$, there exists a deterministic polynomial-time Turing machine that, on input $1^\secp$, outputs a circuit description of $\cA_\secp$.

\begin{importedtheorem}[Gentle Measurement \cite{DBLP:journals/tit/Winter99}]\label{lemma:gentle-measurement}
Let $\rho^{\sX}$ be a quantum state and let $(\Pi,\bbI-\Pi)$ be a projective measurement on $\sX$ such that $\Tr(\Pi\rho) \geq 1-\delta$. Let \[\rho' = \frac{\Pi\rho\Pi}{\Tr(\Pi\rho)}\] be the state after applying $(\Pi,\bbI-\Pi)$ to $\rho$ and post-selecting on obtaining the first outcome. Then, $\TD(\rho,\rho') \leq 2\sqrt{\delta}$.
\end{importedtheorem}

\begin{importedtheorem}[Distinguishing implies Mapping \cite{cryptoeprint:2022/786}]\label{impthm:DS}
Let $\sD$ be a projector, $\Pi_0,\Pi_1$ be orthogonal projectors, and $\ket{\psi} \in \mathsf{Im}\left(\Pi_0+\Pi_1\right)$. Then,

\[\|\Pi_1\sD\Pi_0\ket{\psi}\|^2 + \|\Pi_0\sD\Pi_1\ket{\psi}\|^2 \geq \frac{1}{2}\left(\|\sD\ket{\psi}\|^2 - \left(\|\sD\Pi_0\ket{\psi}\|^2 + \|\sD\Pi_1\ket{\psi}\|^2 \right)\right)^2.\]
\end{importedtheorem}

\section{Main Theorem}


\begin{theorem}\label{thm:main}
Let $F: \{0,1\}^{n(\secp)} \to \{0,1\}^{m(\secp)}$ be a one-way function secure against QPT adversaries. Let $\{\cZ_\secp(\cdot,\cdot,\cdot,\cdot)\}_{\secp \in \bbN}$ be a quantum operation with four arguments: an $n(\secp)$-bit string $z$, two $m(\secp)$-bit strings $y_0,y_1$, and an $n(\secp)$-qubit quantum state $\ket{\psi}$. Suppose that for any QPT adversary $\{\cA_\secp\}_{\secp \in \bbN}$, $z \in \{0,1\}^{n(\secp)},y_0,y_1 \in \{0,1\}^{m(\secp)}$, and $n(\secp)$-qubit state $\ket{\psi}$,

\[\bigg| \Pr\left[\cA_\secp(\cZ_\secp(z,y_0,y_1,\ket{\psi})) = 1\right] - \Pr\left[\cA_\secp(\cZ_\secp(0^{\secp},y_0,y_1,\ket{\psi})) = 1\right]\bigg| = \negl(\secp).\]

That is, $\cZ_\secp$ is semantically-secure with respect to its first input.\footnote{One can usually think of $\cZ_\secp$ as just encrypting its first input and leaving the remaining in the clear. However, we need to formulate the more general definition of $\cZ_\secp$ that operates on all inputs to handle certain applications, such as attribute-based encryption. See \cite{BK23} for details.} Now, for any QPT adversary $\{\cA_\secp\}_{\secp \in \bbN}$, consider the following distribution $\left\{\widetilde{\cZ}_\secp^{\cA_\secp}(b)\right\}_{\secp \in \bbN, b \in \{0,1\}}$ over quantum states, obtained by running $\mathcal{A}_\lambda$ as follows.

\begin{itemize}
    \item Sample $x_0,x_1 \gets \{0,1\}^{n(\secp)}$ conditioned on $x_0 \neq x_1$, define $y_0 = F(x_0), y_1 = F(x_1)$ and initialize $\cA_\secp$ with \[\cZ_\secp\left(x_0 \oplus x_1, y_0,y_1,\frac{1}{\sqrt{2}}\left(\ket{x_0} + (-1)^b \ket{x_1}\right)\right).\]
    \item $\cA_\secp$'s output is parsed as a string $x' \in \{0,1\}^{n(\secp)}$ and a residual state on register $\sA'$.
    \item If $F(x') \in \{y_0,y_1\}$, then output $\sA'$, and otherwise output $\bot$.
\end{itemize}

Then, 

\[\TD\left(\widetilde{\cZ}_\secp^{\cA_\secp}(0),\widetilde{\cZ}_\secp^{\cA_\secp}(1)\right) = \negl(\secp).\]
\end{theorem}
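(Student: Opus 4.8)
The plan is to follow exactly the hybrid sequence sketched in the technical outline, formalizing the ``defer the dependence on $b$'' trick and then the ``Hadamard abort'' trick. I would begin with $\widetilde{\cZ}_\secp^{\cA_\secp}(b)$ and first move to a randomized-guess experiment: sample $c \gets \{0,1\}$, initialize $\cA_\secp$ with $\cZ_\secp(x_0\oplus x_1, y_0,y_1, \frac{1}{\sqrt 2}(\ket{x_0}+(-1)^c\ket{x_1}))$, and at the end output $\bot$ unless both $F(x')\in\{y_0,y_1\}$ \emph{and} $c=b$. Call this $\widehat{\cZ}_\secp^{\cA_\secp}(b)$. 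A standard averaging argument gives $\TD(\widehat{\cZ}_\secp^{\cA_\secp}(0),\widehat{\cZ}_\secp^{\cA_\secp}(1)) \geq \frac12\TD(\widetilde{\cZ}_\secp^{\cA_\secp}(0),\widetilde{\cZ}_\secp^{\cA_\secp}(1))$, so it suffices to bound the former by $\negl(\secp)$.

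Next I would purify: replace the classical sampling of $c$ with a register $\sC$ initialized to $\ket{+}_\sC$ (ignoring normalization), feed $\frac{1}{\sqrt 2}(\ket{x_0}+(-1)^c\ket{x_1})$ coherently controlled on $\sC$ into $\cZ_\secp$, run $\cA_\secp$ on the resulting register $\sA$, and only measure $\sC$ in the computational basis at the very end (to extract $c$ for the $c=b$ check). This is valid because $\cZ_\secp$, $\cA_\secp$, and the $F(x')\in\{y_0,y_1\}$ check all act only on $\sA$ and not on $\sC$, so deferring the $\sC$-measurement to the end does not change the output distribution. Using the identity $\frac{1}{2}\sum_{c}\ket{c}_\sC\otimes(\ket{x_0}+(-1)^c\ket{x_1})_\sA = \frac{1}{\sqrt 2}(\ket{+}_\sC\ket{x_0}_\sA+\ket{-}_\sC\ket{x_1}_\sA)$, after running $\cZ_\secp$ and $\cA_\secp$ the joint state is a superposition over the two branches $c'\in\{0,1\}$ indexed by which $y_{c'}$ the returned $x'$ is a preimage of. Here I would want $y_0\neq y_1$ except with negligible probability over the choice of $x_0\neq x_1$; if $y_0=y_1$ I can simply abort, and by one-wayness (or just by the fact that a distinguisher finding a collision would invert $F$) this is a $\negl(\secp)$ event — actually I should be careful and instead merge the two branches only via the value $F(x')$, handling the $y_0=y_1$ case by noting the output is then symmetric in $b$ anyway, or folding it into the abort analysis below.

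The crucial hybrid introduces an extra abort: after $\cA_\secp$ returns $x'$ with $F(x')=y_{c'}$, the challenger measures $\sC$ in the \emph{Hadamard} basis, obtains $c''$, and aborts unless $c''=c'$. I claim this change is undetectable, i.e. the probability the challenger aborts here is $\negl(\secp)$; this is the main obstacle. To prove it I would invoke \Cref{impthm:DS} (``distinguishing implies mapping''): let $\Pi_0 = \ket{x_0}\!\bra{x_0}_\sA\otimes\ket{+}\!\bra{+}_\sC$ and $\Pi_1 = \ket{x_1}\!\bra{x_1}_\sA\otimes\ket{-}\!\bra{-}_\sC$ (projectors onto the two branches of the honestly-prepared state, hence orthogonal, and the post-$\cZ_\secp$ state lies in their span up to the semantic-security error), and let $\sD$ be the projector corresponding to $\cA_\secp$ returning an $x'$ with $F(x')\in\{y_0,y_1\}$ together with the ``wrong-branch'' event. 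A nonnegligible Hadamard-abort probability would mean $\|\Pi_1 \sD \Pi_0\ket{\psi}\|^2 + \|\Pi_0\sD\Pi_1\ket{\psi}\|^2$ is nonnegligible, which by \Cref{impthm:DS} forces the ``cross term'' to be large, and from there I would extract an inverter: the cross-branch amplitude witnesses an adversary that, starting essentially from $\ket{x_{1-c'}}$ (known to the reduction, which sampled $x_0,x_1$), produces a preimage of $y_{c'}$ — i.e., inverts $F$ on a fresh random image — contradicting one-wayness. Semantic security of $\cZ_\secp$ is used to argue that we may replace the first input $x_0\oplus x_1$ by $0^\secp$ throughout the reduction so the reduction need not know $x_0\oplus x_1$ (equivalently, need not know both preimages at once). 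Finally, once this abort is in place, the bit $c$ is extracted by measuring a state of the form $Z^{c'}\ket{\pm}_\sC$ in the computational basis, which yields a uniformly random bit independent of everything else, in particular independent of $b$; hence the $c=b$ check succeeds with probability exactly $1/2$ regardless of $b$ and the conditional leftover state on $\sA'$ carries no dependence on $b$. Chaining the $\TD$ bounds through the hybrids (each step costing $\negl(\secp)$, with \Cref{lemma:gentle-measurement} used where a measurement is nearly determined) gives $\TD(\widehat{\cZ}_\secp^{\cA_\secp}(0),\widehat{\cZ}_\secp^{\cA_\secp}(1)) = \negl(\secp)$, and combined with the factor-$2$ loss from the guessing step, $\TD(\widetilde{\cZ}_\secp^{\cA_\secp}(0),\widetilde{\cZ}_\secp^{\cA_\secp}(1)) = \negl(\secp)$, as desired.
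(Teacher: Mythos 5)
Your hybrid sequence is exactly the paper's: guess $c$, purify it into a register $\sC$ in superposition, insert a Hadamard-basis abort on $\sC$ keyed to which image $y_{c'}$ the certificate inverts, argue that abort is negligible via one-wayness plus semantic security of $\cZ_\secp$, and conclude that the final computational-basis measurement of $\sC$ is a uniform coin independent of $b$. The one place you diverge is the crux step bounding the Hadamard-abort probability: the paper does \emph{not} use \Cref{impthm:DS} here. It simply commutes the Hadamard measurement of $\sC$ to before $\cA_\secp$ runs (valid since nothing between preparation and that measurement touches $\sC$), which collapses the adversary's input to $\ket{x_{c''}}$; the abort event then literally says that $\cA_\secp$, given one preimage and two images, outputs a preimage of the \emph{other} image, which is a direct inversion of $F$ at a fresh point. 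Your \Cref{impthm:DS} invocation is both unnecessary and, as written, backwards: the lemma lower-bounds the cross terms $\|\Pi_1\sD\Pi_0\ket{\psi}\|^2+\|\Pi_0\sD\Pi_1\ket{\psi}\|^2$ by the distinguishing advantage, whereas you assert the abort probability \emph{is} the cross term and then say the lemma "forces the cross term to be large"; moreover your $\Pi_0,\Pi_1$ as defined do not commute past $\cZ_\secp$ and $\cA_\secp$ in the way the lemma's hypotheses require. Since you also state the correct final reduction (map $\ket{x_{1-c'}}$ to a preimage of $y_{c'}$), the argument can be repaired by just dropping the lemma and doing the measurement-commutation directly. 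One point in your favor: you notice the $y_0=y_1$ edge case (where $c'$ is ill-defined and the reduction degenerates), which the paper glosses over; it is indeed negligible because a noticeable collision probability of $F$ on two random inputs yields a trivial inverter.
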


\begin{proof}

We define a sequence of hybrids.

\begin{itemize}
    \item $\Hyb_0(b)$: This is the distribution $\left\{\widetilde{\cZ}_\secp^{\cA_\secp}(b)\right\}_{\secp \in \bbN, b \in \{0,1\}}$ described above.
    \item $\Hyb_1(b)$: This distribution is sampled as follows.
    \begin{itemize}
        \item Sample $x_0,x_1,y_0 = F(x_0),y_1 = F(x_1)$, prepare the state \[\frac{1}{2}\sum_{c \in \{0,1\}}\ket{c}_\sC \otimes \left(\ket{x_0} + (-1)^{c}\ket{x_1}\right)_\sA,\] and initialize $\cA_\secp$ with \[\cZ_\secp\left(x_0 \oplus x_1,y_0,y_1,\sA\right).\]
        \item $\cA_\secp$'s output is parsed as a string $x' \in \{0,1\}^{n(\secp)}$ and a residual state on register $\sA'$.
        \item If $F(x') \notin \{y_0,y_1\}$, then output $\bot$. Next, measure register $\sC$ in the computational basis and output $\bot$ if the result is $1-b$. Otherwise, output $\sA'$.
    \end{itemize}
    \item $\Hyb_2(b)$: This distribution is sampled as follows.
    \begin{itemize}
        \item Sample $x_0,x_1,y_0 = F(x_0),y_1 = F(x_1)$, prepare the state \[\frac{1}{2}\sum_{c \in \{0,1\}}\ket{c}_\sC \otimes \left(\ket{x_0} + (-1)^{c}\ket{x_1}\right)_\sA,\] and initialize $\cA_\secp$ with \[\cZ_\secp\left(x_0 \oplus x_1,y_0,y_1,\sA\right).\]
        \item $\cA_\secp$'s output is parsed as a string $x' \in \{0,1\}^{n(\secp)}$ and a residual state on register $\sA'$.
        \item If $F(x') \notin \{y_0,y_1\}$, then output $\bot$. Next, let $c' \in \{0,1\}$ be such that $F(x') = y_{c'}$, measure register $\sC$ in the Hadamard basis, and output $\bot$ if the result is $1-c'$. Next, measure register $\sC$ in the computational basis and output $\bot$ if the result is $1-b$. Otherwise, output $\sA'$.
    \end{itemize}
\end{itemize}



We define $\Advt(\Hyb_i) \coloneqq \TD\left(\Hyb_i(0),\Hyb_i(1)\right).$ To complete the proof, we show the following sequence of claims.


\begin{claim}\label{claim:hyb2}
    $\Advt(\Hyb_2) = 0$.
\end{claim}

\begin{proof}
This follows by definition. Observe that $\Hyb_2$ only depends on the bit $b$ when it decides whether to abort after measuring register $\sC$ in the computational basis. But at this point, it is guaranteed that register $\sC$ is in a Hadamard basis state, so this will result in an abort with probability 1/2 regardless of the value of $b$.
\end{proof}

\begin{claim}\label{claim:hyb1}
    $\Advt(\Hyb_1) = \negl(\secp)$.
\end{claim}

\begin{proof}
    Given \cref{claim:hyb2}, it suffices to show that for each $b \in \{0,1\}$, $\TD(\Hyb_1(b),\Hyb_2(b)) = \negl(\secp)$. The only difference between these hybrids is the introduction of a measurement of $\sC$ in the Hadamard basis. By Gentle Measurement (\cref{lemma:gentle-measurement}), it suffices to show that this measurement results in an abort with probability $\negl(\secp)$. 
    
    So suppose otherwise. That is, the following experiment outputs 1 with probability $\nonnegl(\secp)$.
    
    \begin{itemize}
        \item Sample $x_0,x_1,y_0 = F(x_0),y_1=F(x_1)$, prepare the state \[\frac{1}{2}\sum_{c \in \{0,1\}}\ket{c}_\sC \otimes \left(\ket{x_0} + (-1)^{c}\ket{x_1}\right)_\sA,\] and initialize $\cA_\secp$ with \[\cZ_\secp\left(x_0 \oplus x_1,y_0,y_1,\sA\right).\]
        \item $\cA_\secp$'s output is parsed as a string $x' \in \{0,1\}^{n(\secp)}$ and a residual state on register $\sA'$.
        \item If $F(x') \notin \{y_0,y_1\}$, then output $\bot$. Next, let $c' \in \{0,1\}$ be such that $F(x') = y_{c'}$, measure register $\sC$ in the Hadamard basis, and output 1 if the result is $1-c'$.
    \end{itemize}
    
    Next, observe that we can commute the measurement of $\sC$ in the Hadamard basis to before the adversary is initialized, without affecting the outcome of the experiment:
    
    \begin{itemize}
        \item Sample $x_0,x_1,y_0 = F(x_0),y_1=F(x_1)$, prepare the state \[\frac{1}{2}\sum_{c \in \{0,1\}}\ket{c}_\sC \otimes \left(\ket{x_0} + (-1)^{c}\ket{x_1}\right)_\sA = \frac{1}{\sqrt{2}}\left(\ket{+}_\sC\ket{x_0}_\sA + \ket{-}_\sC\ket{x_1}_\sA\right),\] measure $\sC$ in the Hadamard basis to obtain $c'' \in \{0,1\}$ and initialize $\cA_\secp$ with the resulting information \[\cZ_\secp\left(x_0 \oplus x_1,y_0,y_1,\ket{x_{c''}}_\sA\right).\]
        \item $\cA_\secp$'s output is parsed as a string $x' \in \{0,1\}^{n(\secp)}$ and a residual state on register $\sA'$.
        \item If $F(x') \notin \{y_0,y_1\}$, then output $\bot$. Next, let $c' \in \{0,1\}$ be such that $F(x') = y_{c'}$, and output 1 if $c'' = 1-c'$.
    \end{itemize}
    
    Finally, note that any such $\cA_\secp$ can be used to break the one-wayness of $F$. To see this, we can first appeal to the semantic security of $\cZ_\secp$ and replace $x_0 \oplus x_1$ with $0^{n(\secp)}$. Then, note that the only information $\cA_\secp$ receives is two images and one preimage $F$, and $\cA_\secp$ is tasked with finding the \emph{other} preimage of $F$. Succeeding at this task with probability $\nonnegl(\secp)$ clearly violates the one-wayness of $F$.

\end{proof}

\begin{claim}
    $\Advt(\Hyb_0) = \negl(\secp)$.
\end{claim}

\begin{proof}
This follows because $\Hyb_1(b)$ is identically distributed to the distribution that outputs $\bot$ with probability 1/2 and otherwise outputs $\Hyb_0(b)$, so the advantage of $\Hyb_0$ is at most double the advantage of $\Hyb_1$.
\end{proof}

\end{proof}

\section{Cryptography with Publicly-Verifiable Deletion}

Let us now introduce some formal definitions. A public-key encryption (PKE) scheme with publicly-verifiable deletion (PVD) has the following syntax.

\begin{itemize}
    \item $\PV\Gen(1^\secp) \to (\pk,\sk)$: the key generation algorithm takes as input the security parameter $\secp$ and outputs a public key $\pk$ and secret key $\sk$.
    \item $\PV\Enc(\pk,b) \to (\vk,\ket{\ct})$: the encryption algorithm takes as input the public key $\pk$ and a plaintext $b$, and outputs a (public) verification key $\vk$ and a ciphertext $\ket{\ct}$.
    \item $\PV\Dec(\sk,\ket{\ct}) \to b$: the decryption algorithm takes as input the secret key $\sk$ and a ciphertext $\ket{\ct}$ and outputs a plaintext $b$.
    \item $\PV\Del(\ket{\ct}) \to \pi$: the deletion algorithm takes as input a ciphertext $\ket{\ct}$ and outputs a deletion certificate $\pi$.
    \item $\PV\Vrfy(\vk,\pi) \to \{\top,\bot\}$: the verify algorithm takes as input a (public) verification key $\vk$ and a proof $\pi$, and outputs $\top$ or $\bot$.
\end{itemize}

\begin{definition}[Correctness of deletion]\label{def:correctness-deletion}
A PKE scheme with PVD satisfies \emph{correctness of deletion} if for any $b$, it holds with $1-\negl(\secp)$ probability over $(\pk,\sk) \gets \PV\Gen(1^\secp), (\vk,\ket{\ct}) \gets \PV\Enc(\pk,b),\pi \gets \PV\Del(\ket{\ct}),\mu \gets \PV\Vrfy(\vk,\pi)$ that $\mu = \top$.
\end{definition}

\begin{definition}[Certified deletion security]\label{def:security-deletion}
A PKE scheme with PVD satisfies \emph{certified deletion security} if it satisfies standard semantic security, and moreover, for any QPT adversary $\{\cA_\secp\}_{\secp \in \bbN}$, it holds that 
\[\TD\left(\mathsf{EvPKE}_{\cA,\secp}(0),\mathsf{EvPKE}_{\cA,\secp}(1)\right) = \negl(\secp),\] where the experiment $\mathsf{EvPKE}_{\cA,\secp}(b)$ is defined as follows.
\begin{itemize}
    \item Sample $(\pk,\sk) \gets \PV\Gen(1^\secp)$ and $(\vk,\ket{\ct}) \gets \PV\Enc(\pk,b)$.
    \item Run $\cA_\secp(\pk,\vk,\ket{\ct})$, and parse their output as a deletion certificate $\pi$ and a state on register $\sA'$.
    \item If $\PV\Vrfy(\vk,\pi) = \top$, output $\sA'$, and otherwise output $\bot$.
\end{itemize}
\end{definition}

\paragraph{Construction via OWF.} We now present our generic compiler that augments any (post-quantum secure) PKE scheme with the PVD property, assuming the existence of one-way functions.

\begin{construction}[PKE with PVD from OWF]\label{const:PKE-PVD-OWF} Let $\lambda \in \N$,  let \[F: \{0,1\}^{n(\secp)} \to \{0,1\}^{m(\secp)}\] be a one-way function, and let $(\Gen,\Enc,\Dec)$ be a standard (post-quantum) public-key encryption scheme. Consider the PKE scheme with PVD consisting of the following efficient algorithms:

\begin{itemize}
    \item $\PV\Gen(1^\secp)$: Same as $\Gen(1^\secp)$.
    \item $\PV\Enc(\pk,b)$: Sample $x_0,x_1 \gets \{0,1\}^{n(\secp)}$, define $y_0 = F(x_0), y_1 = F(x_1)$, and output \[\vk \coloneqq (y_0,y_1), ~~~ \ket{\ct} \coloneqq \left(\Enc(\pk,x_0 \oplus x_1), \frac{1}{\sqrt{2}}\left(\ket{x_0} + (-1)^b \ket{x_1}\right)\right).\]
    \item $\PV\Dec(\sk,\ket{\ct})$: Parse $\ket{\ct}$ as a classical ciphertext $\ct'$ and a quantum state $\ket{\psi}$. Compute $z \gets \Dec(\sk,\ct')$, measure $\ket{\psi}$ in the Hadamard basis to obtain $w \in \{0,1\}^{n(\secp)}$, and output the bit $b = z \cdot w$.
    \item $\PV\Del(\ket{\ct})$: Parse $\ket{\ct}$ as a classical ciphertext $\ct'$ and a quantum state $\ket{\psi}$. Measure $\ket{\psi}$ in the computational basis to obtain $x' \in \{0,1\}^{n(\secp)}$, and output $\pi \coloneqq x'$.
    \item $\PV\Vrfy(\vk,\pi)$: Parse $\vk$ as $(y_0,y_1)$ and output $\top$ if and only if $F(\pi) \in \{y_0,y_1\}$.
\end{itemize}
\end{construction}

\begin{theorem}\label{thm:PKE-security}
If one-way functions exist, then \Cref{const:PKE-PVD-OWF} instantiated with any (post-quantum) public-key encryption scheme satisfies correctness of deletion (according to \cref{def:correctness-deletion}) as well as (everlasting) certified deletion security according to \Cref{def:security-deletion}.
\end{theorem}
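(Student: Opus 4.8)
The plan is to get correctness of deletion by inspection, to obtain certified deletion security as a direct instantiation of \Cref{thm:main}, and to prove (post-quantum) semantic security by a short hybrid whose final step is the same ``recover the other preimage'' argument that drives the proof of \cref{claim:hyb1}. For correctness of deletion: write $\ket{\ct} = (\ct',\ket{\psi})$ with $\ket{\psi} = \frac{1}{\sqrt 2}(\ket{x_0}+(-1)^b\ket{x_1})$, where $x_0 \neq x_1$ except with probability $2^{-n(\secp)}$ over $\PV\Enc$. Then $\PV\Del$ measures $\ket{\psi}$ in the computational basis, obtaining $x' \in \{x_0,x_1\}$, and since $y_0 = F(x_0)$, $y_1 = F(x_1)$ we have $F(x') \in \{y_0,y_1\}$, so $\PV\Vrfy((y_0,y_1),x') = \top$; hence \Cref{def:correctness-deletion} holds with probability $1 - 2^{-n(\secp)}$.

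For certified deletion security, I would instantiate \Cref{thm:main} with the operation $\cZ_\secp(z,y_0,y_1,\ket{\psi})$ that first samples $(\pk,\sk)\gets\Gen(1^\secp)$ and then outputs $(\pk,y_0,y_1,\Enc(\pk,z),\ket{\psi})$; its semantic security in the first input is exactly post-quantum semantic security of the underlying PKE scheme (the remaining inputs are output in the clear). Given a QPT adversary $\cA$ for the experiment $\mathsf{EvPKE}$, let $\cA'$ be the adversary for $\widetilde{\cZ}_\secp$ that parses its input as $(\pk,\vk,\ket{\ct})$, runs $\cA$ on it, and forwards $\cA$'s certificate together with its residual register. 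Then, up to the negligible event $x_0 = x_1$, the distribution $\widetilde{\cZ}_\secp^{\cA'}(b)$ is identical to $\mathsf{EvPKE}_{\cA,\secp}(b)$: the tuple fed to $\cA$ is the same since $\vk = (y_0,y_1)$ and $\ket{\ct} = (\Enc(\pk,x_0\oplus x_1),\ket{\psi})$, and the abort condition matches because $\PV\Vrfy(\vk,\pi) = \top$ iff $F(\pi) \in \{y_0,y_1\}$. \Cref{thm:main} then gives $\TD(\mathsf{EvPKE}_{\cA,\secp}(0),\mathsf{EvPKE}_{\cA,\secp}(1)) = \negl(\secp)$, and since this is a bound on trace distance it already yields the \emph{everlasting} guarantee --- the distinguisher acting on the returned residual state need not be efficient.

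It remains to establish ordinary semantic security of the compiled scheme, where the adversary is given $(\pk,\vk,\ket{\ct})$. I would use two steps. First, replace $\Enc(\pk,x_0\oplus x_1)$ in the challenge ciphertext with $\Enc(\pk,0^{n(\secp)})$; indistinguishability follows from semantic security of the underlying PKE via the obvious reduction, which samples $x_0,x_1$, computes $y_0,y_1$ and $\ket{\psi}$ itself, and requests the challenge ciphertext externally. Second, in the resulting experiment the adversary sees only $(\pk,y_0,y_1,\Enc(\pk,0^{n(\secp)}),\frac{1}{\sqrt 2}(\ket{x_0}+(-1)^b\ket{x_1}))$, and I would bound its distinguishing advantage $\epsilon$ by the one-wayness of $F$: applying \Cref{impthm:DS} with $\Pi_0 = \ket{x_0}\bra{x_0}$, $\Pi_1 = \ket{x_1}\bra{x_1}$ acting on the quantum ciphertext register and $\sD$ the adversary's final guessing measurement, one of the mapping norms $\|\Pi_1 \sD \ket{x_0}\ket{0}_{\mathsf{anc}}\|^2$, $\|\Pi_0 \sD \ket{x_1}\ket{0}_{\mathsf{anc}}\|^2$ must be $\Omega(\epsilon^2)$. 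Such a norm gives a QPT extractor --- run the adversary's circuit, measure its guess bit, uncompute the circuit, and read off the ciphertext register --- that on input one of $\ket{x_0},\ket{x_1}$ (along with $y_0,y_1$ and $\Enc(\pk,0^{n(\secp)})$) outputs the \emph{other} preimage of $F$ with probability $\Omega(\epsilon^2)$. Guessing which of the two mapping terms is large (a factor-$2$ loss) and planting the one-wayness challenge as the corresponding $y_i$ then contradicts one-wayness, exactly as in \cref{claim:hyb1}.

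The main obstacle, I expect, is this last step: one has to argue carefully that the ``mapping'' norm handed back by \Cref{impthm:DS} is actually realized by an honest measurement procedure (this is why the extractor must uncompute the adversary's circuit before reading the ciphertext register), and one has to check the bookkeeping --- the factor-$2$ loss from guessing, and the point that it is precisely the switch to $\Enc(\pk,0^{n(\secp)})$ that lets the extractor be run without knowing $x_0\oplus x_1$. The remaining parts are either a one-line computation (correctness) or a purely syntactic identification of $\mathsf{EvPKE}$ with the experiment of \Cref{thm:main}.
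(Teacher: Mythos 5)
Your proposal is correct and follows essentially the same route as the paper: correctness by direct inspection, everlasting certified deletion as a syntactic instantiation of \Cref{thm:main} with $\cZ_\secp$ built from $\Enc$, and semantic security via the hybrid that first switches to $\Enc(\pk,0^{n(\secp)})$ and then invokes \Cref{impthm:DS} to turn a distinguisher into a preimage-to-preimage mapper contradicting one-wayness. The only (cosmetic) divergence is that your correctness argument addresses verification of the deletion certificate directly, and your reduction details (the uncomputation step, the factor-$2$ guess, and why the switch to $\Enc(\pk,0^{n(\secp)})$ is needed before extraction) are spelled out more explicitly than in the paper's sketch.
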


\begin{proof}
 Let $(\Gen,\Enc,\Dec)$ be a standard (post-quantum) public-key encryption scheme. Then, correctness of deletion follows from the fact that measuring $\frac{1}{\sqrt{2}}(\ket{x_0} + \ket{x_1})$ in the Hadamard basis produces a vector orthogonal to $x_0 \oplus x_1$, whereas measuring the state $\frac{1}{\sqrt{2}}(\ket{x_0} - \ket{x_1})$ in the Hadamard basis produces a vector that is not orthogonal to $x_0 \oplus x_1$. 
 
 Next, we note that semantic security follows from a sequence of hybrids. First, we appeal to the semantic security of the public-key encryption scheme $(\Gen,\Enc,\Dec)$ to replace $\Enc(\pk,x_0 \oplus x_1)$ with $\Enc(\pk,0^{n(\secp)})$. Next, we introduce a measurement of $\frac{1}{\sqrt{2}}(\ket{x_0} + (-1)^b\ket{x_1})$ in the standard basis before initializing the adversary. By a straightforward application of \cref{impthm:DS}, a QPT adversary that can distinguish whether or not this measurement was applied can be used to break the one-wayness of $F$. Finally, note that the ciphertext now contains no information about $b$, completing the proof.
 
 Finally, the remaining part of certified deletion security follows from \cref{thm:main}, by setting $\cZ_\secp(x_0\oplus x_1,y_0,y_1,\ket{\psi}) = \Enc(\pk,x_0 \oplus x_1),y_0,y_1,\ket{\psi}$ and invoking the semantic security of the public-key encryption scheme $(\Gen,\Enc,\Dec)$.
\end{proof}

\begin{remark}
Following \cite{BK23}, we can plug various primitives into the above compiler to obtain $X$ with $\mathsf{PVD}$ for $X \in \{\text{commitment},\allowbreak\text{attribute-based }\allowbreak \text{encryption},\allowbreak\text{fully-homomormphic }\allowbreak \text{encryption},\allowbreak\text{witness } \allowbreak \text{encryption}, \allowbreak\text{timed-release } \allowbreak\text{encryption}\}$.
\end{remark}

\section{Publicly-Verifiable Deletion from One-Way State Generators}

In this section, we show how to relax the assumptions behind our generic compiler for PVD to something potentially even weaker than one-way functions, namely the existence of so-called one-way state generators (if we allow for quantum verification keys).
Morimae and Yamakawa~\cite{C:MorYam22} introduced one-way state generator (OWSG) as a quantum analogue of a one-way function.

\begin{definition}[One-Way State Generator]\label{def:owsg} Let $n \in \mathbb{N}$ be the security parameter. A one-way state generator $(\mathsf{OWSG})$ is a tuple $(\mathsf{KeyGen},\mathsf{StateGen},\mathsf{Ver})$ consisting of QPT algorithms:
\begin{description}
\item $\mathsf{KeyGen}(1^n) \rightarrow k$: given as input $1^n$, it outputs a uniformly random key $k \leftarrow \{0,1\}^n$.  
\item $\mathsf{StateGen}(k) \rightarrow \phi_k$: given as input a key $k \in \{0,1\}^n$, it outputs an $m$-qubit quantum state $\phi_k$.
\item $\mathsf{Ver}(k',\phi_k) \rightarrow \top/\bot$: given as input a supposed key $k'$ and state $\phi_k$, it outputs $\top$ or $\bot$.
\end{description}
 We require that the following property holds:
 \paragraph{Correctness:} For any $n \in \N$, the scheme $(\mathsf{KeyGen},\mathsf{StateGen},\mathsf{Ver})$ satisfies
 $$
 \Pr[\top \leftarrow \mathsf{Ver}(k,\phi_k) \, : \, k \leftarrow \mathsf{KeyGen}(1^n), \, \phi_k \leftarrow \mathsf{StateGen}(k) ] \geq 1- \mathrm{negl}(n).
 $$
\end{definition}

\paragraph{Security:}
For any computationally bounded quantum algorithm $\mathcal{A}$ and any $t = \poly(\lambda)$:
 $$
 \Pr[\top \leftarrow \mathsf{Ver}(k',\phi_k) \, : \, k \leftarrow \mathsf{KeyGen}(1^n), \, \phi_k \leftarrow \mathsf{StateGen}(k), \, k' \leftarrow \mathcal{A}(\phi_k^{\otimes t})] \leq \mathrm{negl}(n).
 $$

Morimae and Yamakawa~\cite{C:MorYam22} showed that if pseudorandom quantum state generators with $m 
\geq c \cdot n$ for some constant $c >1 $ exist, then so do one-way state generators. Informally, a pseudorandom state generator~\cite{C:JiLiuSon18} is
a QPT algorithm that, on input $k \in \{0,1\}^n$, outputs an
$m$-qubit state $\ket{\phi_k}$ such that $\ket{\phi_k}^{\otimes t}$ over uniformly random $k$ is computationally
indistinguishable from a Haar random states of the same number of copies, for any
polynomial $t(n)$. Recent works~\cite{Kretschmer,algorithmica} have shown oracle separations between pseudorandom state generators and one-way functions, indicating that these quantum primitives are potentially weaker than one-way functions.

 \paragraph{Publicly Verifiable Deletion from OWSG.} 
 
 To prove that our generic compiler yields PVD even when instantiated with a OWSG, it suffices to extend \Cref{thm:main} as follows. 
 

\begin{theorem}\label{thm:main-OWSG}
Let $(\mathsf{KeyGen},\mathsf{StateGen,Ver})$ be a OSWG from $n(\secp)$ bits to $m(\secp)$ qubits. Let $\{\cZ_\secp(\cdot,\cdot,\cdot,\cdot)\}_{\secp \in \bbN}$ be a quantum operation with four arguments: an $n(\secp)$-bit string $z$, two $m(\secp)$-qubit quantum states $\phi_{0}, \phi_{1}$, and an $n(\secp)$-qubit quantum state $\ket{\psi}$. Suppose that for any QPT adversary $\{\cA_\secp\}_{\secp \in \bbN}$, $z \in \{0,1\}^{n(\secp)}$, $m(\secp)$-qubit states $\phi_{0},\phi_{1}$, and $n(\secp)$-qubit state $\ket{\psi}$,
\[\bigg|\Pr[\cA_{\secp}\left(\cZ_\secp\left(z,\phi_{0},\phi_{1},\ket{\psi}\right)\right) = 1] - \Pr[\cA_{\secp}\left(\cZ_\secp\left(0^{n(\secp)},\phi_{0},\phi_{1},\ket{\psi}\right)\right)=1] \bigg| = \negl(\secp).\] That is, $\cZ_\secp$ is semantically-secure with respect to its first input. Now, for any QPT adversary $\{\cA_\secp\}_{\secp \in \bbN}$, consider the following distribution $\left\{\widetilde{\cZ}_\secp^{\cA_\secp}(b)\right\}_{\secp \in \bbN, b \in \{0,1\}}$ over quantum states, obtained by running $\cA_\lambda$ as follows.

\begin{itemize}
    \item Sample $x_0,x_1 \gets \{0,1\}^{n(\secp)}$, generate quantum states $\phi_{x_0}$ and $\phi_{x_1}$ by running the procedure \textsf{StateGen} on input $x_0$ and $x_1$, respectfully, and initialize $\cA_\secp$ with \[\cZ_\secp\left(x_0 \oplus x_1, \phi_{x_0}, \phi_{x_1},\frac{1}{\sqrt{2}}\left(\ket{x_0} + (-1)^b \ket{x_1}\right)\right).\]
    \item $\cA_\secp$'s output is parsed as a string $x' \in \{0,1\}^{n(\secp)}$ and a residual state on register $\sA'$.
    \item If $\mathsf{Ver}(x',\psi_{x_i})$ outputs $\top$ for some $i \in \{0,1\}$, then output $\sA'$, and otherwise output $\bot$.
\end{itemize}
Then, 

\[\TD\left(\widetilde{\cZ}_\secp^{\cA_\secp}(0),\widetilde{\cZ}_\secp^{\cA_\secp}(1)\right) = \negl(\secp).\]
\end{theorem}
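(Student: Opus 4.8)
The plan is to mirror the proof of \Cref{thm:main} almost verbatim, replacing the one-way function $F$ with the OWSG $(\mathsf{KeyGen},\mathsf{StateGen},\mathsf{Ver})$ and replacing the classical images $y_0,y_1$ with the quantum states $\phi_{x_0},\phi_{x_1}$. Concretely, I would define the same three hybrids $\Hyb_0,\Hyb_1,\Hyb_2$: $\Hyb_0$ is the distribution $\widetilde{\cZ}_\secp^{\cA_\secp}(b)$ in the statement; $\Hyb_1$ purifies the choice of phase bit into a control register $\sC$ initialized to $\frac{1}{2}\sum_{c}\ket c_\sC\otimes(\ket{x_0}+(-1)^c\ket{x_1})_\sA$, runs $\cA_\secp$ on $\cZ_\secp(x_0\oplus x_1,\phi_{x_0},\phi_{x_1},\sA)$, and — if $\cA_\secp$ returns $x'$ with $\mathsf{Ver}(x',\phi_{x_i})=\top$ for some $i$ — measures $\sC$ in the computational basis and aborts unless the outcome is $b$; and $\Hyb_2$ additionally, before that computational-basis measurement, measures $\sC$ in the Hadamard basis and aborts unless the outcome matches the bit $c'$ for which $\mathsf{Ver}(x',\phi_{x_{c'}})=\top$. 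The three supporting claims are then: $\Advt(\Hyb_2)=0$ (identical argument: once $\sC$ has been measured in the Hadamard basis, the subsequent computational-basis measurement aborts with probability exactly $1/2$ independent of $b$); $\Advt(\Hyb_1)=\negl(\secp)$ via Gentle Measurement (\Cref{lemma:gentle-measurement}), reducing to the claim that the Hadamard-basis measurement of $\sC$ rarely causes an abort; and $\Advt(\Hyb_0)=\negl(\secp)$ because $\Hyb_1(b)$ outputs $\bot$ with probability $1/2$ and otherwise outputs $\Hyb_0(b)$, so advantages differ by a factor of at most two.

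The one genuinely new step, and the main obstacle, is the reduction inside \Cref{claim:hyb1}: showing that no QPT $\cA_\secp$ causes the Hadamard-basis measurement of $\sC$ to abort with non-negligible probability, now using OWSG security rather than OWF one-wayness. As in the proof of \Cref{thm:main}, I would first commute the Hadamard-basis measurement of $\sC$ to before $\cA_\secp$ is initialized, using $\frac{1}{2}\sum_c\ket c_\sC\otimes(\ket{x_0}+(-1)^c\ket{x_1})_\sA = \frac{1}{\sqrt2}(\ket+_\sC\ket{x_0}_\sA+\ket-_\sC\ket{x_1}_\sA)$, so that the measurement outcome $c''$ determines which of $\ket{x_0},\ket{x_1}$ is handed to the adversary; the abort event becomes ``$\cA_\secp$ outputs $x'$ with $\mathsf{Ver}(x',\phi_{x_{1-c''}})=\top$'', i.e. $\cA_\secp$ is given $\cZ_\secp(x_0\oplus x_1,\phi_{x_0},\phi_{x_1},\ket{x_{c''}})$ and must produce a key accepted by the verifier for the \emph{other} state. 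Then I would invoke semantic security of $\cZ_\secp$ to replace $x_0\oplus x_1$ by $0^{n(\secp)}$, after which $\cA_\secp$'s view consists only of $\phi_{x_0},\phi_{x_1}$ (for independent uniform keys) and the computational-basis label $x_{c''}$ of the corresponding key; from this it must output a key $x'$ with $\mathsf{Ver}(x',\phi_{x_{1-c''}})=\top$.

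The subtlety here — which does not arise in the OWF case — is that the OWSG security game only guarantees hardness of producing an accepting key given polynomially many \emph{copies} $\phi_k^{\otimes t}$, whereas in our reduction the adversary additionally holds the state $\phi_{x_{c''}}$ and the classical string $x_{c''}$, which are correlated with $x_{1-c''}$ only through the fact that both are sampled as fresh $\mathsf{KeyGen}$ outputs. I would handle this by noting these are independent samples: one sets up a reduction $\cB$ that receives the OWSG challenge copies of $\phi_k$ (playing the role of $\phi_{x_{1-c''}}$), internally samples the other key $x_{c''}\gets\{0,1\}^{n(\secp)}$ itself, runs $\mathsf{StateGen}$ to get $\phi_{x_{c''}}$, feeds $\cZ_\secp(0^{n(\secp)},\cdot,\cdot,\ket{x_{c''}})$ with the two states placed in the appropriate registers to $\cA_\secp$, and forwards $\cA_\secp$'s output $x'$ as its guess for $k$; then $\cB$ wins exactly when the abort event occurs, with only a single copy of $\phi_k$ needed (so $t=1$ suffices). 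One should also observe that $\mathsf{Ver}(x',\phi_{x_i})=\top$ occurs for $i=c''$ with overwhelming probability anyway by OWSG correctness — but that is harmless, since the abort condition asks for acceptance with respect to the index $1-c''$, and that is precisely what violates security. Apart from this reduction, every step is a line-for-line transcription of the proof of \Cref{thm:main}, so I would keep the write-up short by explicitly pointing to that proof and only spelling out the modified claim.
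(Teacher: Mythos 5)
Your proposal is correct and follows essentially the same route as the paper, which simply states that the proof is analogous to \Cref{thm:main} with OWSG security invoked in place of one-wayness; your write-up fills in exactly that analogy, including the correct observation that the reduction needs only a single copy of the challenge state and can sample the other key independently.
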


\begin{proof}
The proof is analogus to \Cref{thm:main}, except that we invoke the security of the OWSG, rather than the one-wayness of the underlying one-way function.
\end{proof}

 \paragraph{Construction from OWSG.}

 We now consider the following PKE scheme with PVD. The construction is virtually identical to \Cref{const:PKE-PVD-OWF}, except that we replace one-way functions with one-way state generators. This means that the verification key is now quantum.

\begin{construction}[PKE with PVD from OWSG]\label{const:PVD-OWSG}
Let $\lambda \in \N$ and let $(\mathsf{KeyGen},\allowbreak\mathsf{StateGen},\allowbreak\mathsf{Ver})$ be a OSWG, and let $(\Gen,\Enc,\Dec)$ be a standard (post-quantum) public-key encryption scheme. Consider the following PKE scheme with PVD:

\begin{itemize}
    \item $\PV\Gen(1^\secp)$: Same as $\Gen(1^\secp)$.
    \item $\PV\Enc(\pk,b)$: Sample $x_0,x_1 \gets \{0,1\}^{n(\secp)}$ and generate quantum states $\phi_{x_0}$ and $\phi_{x_1}$ by running the procedure \textsf{StateGen} on input $x_0$ and $x_1$, respectfully. Then, output \[\vk \coloneqq (\phi_{x_0}, \phi_{x_1}), ~~~ \ket{\ct} \coloneqq \left(\Enc(\pk,x_0 \oplus x_1), \frac{1}{\sqrt{2}}\left(\ket{x_0} + (-1)^b \ket{x_1}\right)\right).\]
    \item $\PV\Dec(\sk,\ket{\ct})$: Parse $\ket{\ct}$ as a classical ciphertext $\ct'$ and a quantum state $\ket{\psi}$. Compute $z \gets \Dec(\sk,\ct)$, measure $\ket{\psi}$ in the Hadamard basis to obtain $w \in \{0,1\}^{n(\secp)}$, and output the bit $b = z \cdot w$.
    \item $\PV\Del(\ket{\ct})$: Parse $\ket{\ct}$ as a classical ciphertext $\ct'$ and a quantum state $\ket{\psi}$. Measure $\ket{\psi}$ in the computational basis to obtain $x' \in \{0,1\}^{n(\secp)}$, and output $\pi \coloneqq x'$.
    \item $\PV\Vrfy(\vk,\pi)$: Parse $\vk$ as $(\phi_{x_0}, \phi_{x_1})$ and output $\top$ if and only if $\mathsf{Ver}(\pi,\phi_{x_i})$ outputs $\top$, for some $i \in \{0,1\}$. Otherwise, output $\bot$.
\end{itemize}
\end{construction}

\begin{remark}
Unlike in \Cref{const:PKE-PVD-OWF}, the verification key $\vk$ in \Cref{const:PVD-OWSG} is quantum. Hence, the procedure $\PV\Vrfy(\vk,\pi)$ in \Cref{const:PVD-OWSG} may potentially consume the public verification key $(\phi_{x_0}, \phi_{x_1})$ when verifying a dishonest deletion certificate $\pi$. However, by the security of the OWSG scheme, we can simply hand out $(\phi_{x_0}^{\otimes t}, \phi_{x_1}^{\otimes t})$ for any number of $t = \poly(\lambda)$ many copies without compromising security. This would allow multiple users to verify whether a (potentially dishonest) deletion certificate is valid. We focus on the case $t=1$ for simplicity.
\end{remark}

\begin{theorem}
If one-way state
generators exist, then \Cref{const:PVD-OWSG} instantiated with any (post-quantum) public-key encryption scheme satisfies correctness of deletion (according to \cref{def:correctness-deletion}) as well as (everlasting) certified deletion security according to \Cref{def:security-deletion}.
\end{theorem}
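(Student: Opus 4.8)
The plan is to mirror, essentially line by line, the proof of \Cref{thm:PKE-security}, substituting \Cref{thm:main-OWSG} for \Cref{thm:main} and the security of the OWSG for the one-wayness of $F$. There are three things to establish: correctness of deletion, standard semantic security, and certified deletion security.

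First I would handle correctness of deletion. This is purely a statement about the quantum part of the ciphertext and is identical to \Cref{const:PKE-PVD-OWF}: an honest run produces $\frac{1}{\sqrt 2}(\ket{x_0}+\ket{x_1})$, whose Hadamard-basis measurement yields a string $x'$ with $(x_0\oplus x_1)\cdot x' = 0$, hence $x' \in (x_0 \oplus x_1)^\perp$. The deletion algorithm instead measures this state in the \emph{computational} basis, obtaining $x' \in \{x_0, x_1\}$, so that $\mathsf{StateGen}(x')$ was run during $\PV\Enc$ and $\mathsf{Ver}(x', \phi_{x_{c'}}) = \top$ with $1-\negl(\secp)$ probability by the correctness of the OWSG (for the index $c'$ with $x' = x_{c'}$). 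Thus $\PV\Vrfy$ accepts except with negligible probability.

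Next, semantic security. I would run the same hybrid argument as in \Cref{thm:PKE-security}: (i) invoke semantic security of $(\Gen,\Enc,\Dec)$ to replace $\Enc(\pk, x_0\oplus x_1)$ with $\Enc(\pk, 0^{n(\secp)})$; (ii) insert a computational-basis measurement of $\frac{1}{\sqrt 2}(\ket{x_0}+(-1)^b\ket{x_1})$ before the adversary runs — here the distinguishing-implies-mapping lemma (\Cref{impthm:DS}) says any QPT distinguisher for this change yields, with non-negligible probability, a string in $\{x_0,x_1\}$; but given only $\phi_{x_0}$, $\phi_{x_1}$ (polynomially many copies) and $\Enc(\pk,0^{n})$, producing such a string and running $\mathsf{Ver}$ on it would break the security of the OWSG. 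The one subtlety relative to the OWF case is that the OWSG adversary's winning condition is phrased via $\mathsf{Ver}$ accepting, not via equality of preimages; but this is exactly the form in \Cref{def:owsg}, so the reduction goes through once we feed the distinguisher's output string $x'$ into $\mathsf{Ver}(x', \phi_{x_i})$. After step (ii) the ciphertext is independent of $b$, so semantic security holds.

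Finally, certified deletion security is an immediate instantiation of \Cref{thm:main-OWSG}: set $\cZ_\secp(x_0\oplus x_1,\phi_0,\phi_1,\ket{\psi}) = \big(\Enc(\pk,x_0\oplus x_1),\phi_0,\phi_1,\ket{\psi}\big)$, which is semantically secure in its first argument by semantic security of $(\Gen,\Enc,\Dec)$, and observe that $\mathsf{EvPKE}_{\cA,\secp}(b)$ is precisely $\widetilde{\cZ}_\secp^{\cA_\secp}(b)$ with the public key $\pk$ additionally handed to the adversary (which can be folded into $\cA_\secp$). Then \Cref{thm:main-OWSG} gives $\TD(\mathsf{EvPKE}_{\cA,\secp}(0),\mathsf{EvPKE}_{\cA,\secp}(1)) = \negl(\secp)$. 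The only place requiring any care — and the main ``obstacle,'' though it is a mild one — is checking that the OWSG being used with many copies of $\phi_{x_0}, \phi_{x_1}$ (needed because $\PV\Vrfy$ may consume the quantum verification key, cf.\ the remark after \Cref{const:PVD-OWSG}) does not break the reduction; this is fine since OWSG security is stated against adversaries receiving $\phi_k^{\otimes t}$ for any $t=\poly(\secp)$, so we may take $t$ large enough to cover all verification queries.
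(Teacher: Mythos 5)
Your proposal is correct and follows exactly the route the paper takes: the paper's own proof is a one-line remark that the argument is analogous to \Cref{thm:PKE-security} with OWSG security replacing one-wayness, and your expansion (correctness via OWSG correctness on the computational-basis measurement outcome, semantic security via the same hybrid argument, and certified deletion via instantiating \Cref{thm:main-OWSG}) fills in precisely the intended details. The subtleties you flag --- the $\mathsf{Ver}$-based winning condition and the multi-copy issue for the quantum verification key --- are handled consistently with the paper's remark following \Cref{const:PVD-OWSG}.
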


\begin{proof}
The proof is analogous to \cref{thm:PKE-security}, except that we again invoke security of the OWSG, rather than the one-wayness of the underlying one-way function.
\end{proof}

Following~\cite{BK23},
we also immediately obtain:

\begin{theorem}
If one-way state generators exist, then there exists a generic compiler that that adds PVD to any (post-quantum) public-key encryption scheme. Moreover, plugging $X$ into the the compiler yields $X$ with $\mathsf{PVD}$ for \[X \in \left\{
\begin{array}{c}
\text{commitment},\allowbreak\text{attribute-based }\allowbreak\text{encryption},\allowbreak\text{fully-homomormphic }\allowbreak \\\text{encryption},\allowbreak\text{witness }\allowbreak \text{encryption}, \allowbreak\text{timed-release } \allowbreak\text{encryption}
\end{array}\right\}.\]
\end{theorem}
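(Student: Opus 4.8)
The plan is to follow the exact same hybrid structure as the proof of \Cref{thm:main}, replacing each invocation of one-wayness of $F$ with the OWSG security game, and each occurrence of the pair $(y_0,y_1) = (F(x_0),F(x_1))$ with the pair of quantum states $(\phi_{x_0},\phi_{x_1})$. First I would define $\Hyb_0(b)$ to be the experiment $\widetilde{\cZ}_\secp^{\cA_\secp}(b)$ in the theorem statement. Then $\Hyb_1(b)$ purifies the bit $b$: sample $x_0,x_1$, generate $\phi_{x_0},\phi_{x_1}$, prepare $\frac{1}{2}\sum_{c}\ket{c}_\sC \otimes (\ket{x_0}+(-1)^c\ket{x_1})_\sA$, run $\cA_\secp$ on $\cZ_\secp(x_0 \oplus x_1,\phi_{x_0},\phi_{x_1},\sA)$, and after obtaining $x'$, output $\bot$ unless $\mathsf{Ver}(x',\phi_{x_i}) = \top$ for some $i$, then measure $\sC$ in the computational basis and output $\bot$ if the result is $1-b$, else output $\sA'$. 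Then $\Hyb_2(b)$ additionally inserts, right after the $\mathsf{Ver}$ check succeeds for some index $c'$, a Hadamard-basis measurement of $\sC$ that aborts unless the outcome equals $c'$. The claims are then: $\Advt(\Hyb_2) = 0$ because once $\sC$ has been Hadamard-measured, the subsequent computational-basis measurement aborts with probability exactly $1/2$ independent of $b$; $\TD(\Hyb_1(b),\Hyb_2(b)) = \negl(\secp)$ by Gentle Measurement (\Cref{lemma:gentle-measurement}); and $\Advt(\Hyb_0) \le 2\,\Advt(\Hyb_1)$ since $\Hyb_1(b)$ outputs $\bot$ with probability $1/2$ and otherwise reproduces $\Hyb_0(b)$.

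The heart of the argument, exactly as in \Cref{thm:main}, is the claim that the Hadamard-basis measurement in $\Hyb_2$ aborts with only negligible probability. As there, I would first drop the irrelevant final steps and reduce to the experiment that prepares $\frac{1}{\sqrt 2}(\ket{+}_\sC\ket{x_0}_\sA + \ket{-}_\sC\ket{x_1}_\sA)$, commutes the $\sC$-measurement to the front (obtaining $c''$ and collapsing $\sA$ to $\ket{x_{c''}}$), runs $\cA_\secp$ on $\cZ_\secp(x_0 \oplus x_1,\phi_{x_0},\phi_{x_1},\ket{x_{c''}}_\sA)$, checks whether the returned $x'$ satisfies $\mathsf{Ver}(x',\phi_{x_{c'}}) = \top$ for some $c'$, and outputs $1$ iff $c'' = 1-c'$. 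Assume this outputs $1$ with probability $\nonnegl(\secp)$. Then I apply semantic security of $\cZ_\secp$ to replace $x_0 \oplus x_1$ with $0^{n(\secp)}$, after which $\cA_\secp$'s input consists only of $\phi_{x_0},\phi_{x_1},\ket{x_{c''}}$ for a uniformly random bit $c''$, i.e.\ the two OWSG state-copies together with one of the two keys in the clear, and $\cA_\secp$ succeeds (with non-negligible probability) in producing a $x'$ that verifies against the state indexed by $1-c''$ — the \emph{other} key. This directly breaks OWSG security: guess $c''$, embed the challenge state $\phi_k$ (with $t=1$ copy) in position $1-c''$, sample the other key honestly, and output $\cA_\secp$'s $x'$.

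The one subtlety relative to the OWF proof — and the step I expect to require the most care — is the following asymmetry. In the OWF setting, once $x'$ is output one classically checks $F(x') \in \{y_0,y_1\}$ and unambiguously recovers the index $c'$; here the branching is determined by a \emph{quantum} measurement $\mathsf{Ver}(x',\phi_{x_i})$, which may disturb $\phi_{x_i}$ and, in principle, could succeed for \emph{both} $i=0$ and $i=1$ (or for neither in a way that depends on measurement order). I would handle this by fixing a canonical order — first run $\mathsf{Ver}(x',\phi_{x_0})$, and only if it rejects run $\mathsf{Ver}(x',\phi_{x_1})$ — so that $c'$ is well-defined, and note that the $\sC$-measurement abort-analysis only needs an \emph{upper} bound on the abort probability, which is unaffected by this convention; the reduction to OWSG security then uses that $\mathsf{Ver}$-accepting against the challenge copy $\phi_k$ in slot $1-c''$ is exactly the OWSG winning condition, with the (at most one) prior $\mathsf{Ver}$ call on the honestly-sampled slot performed by the reduction itself on its own local state. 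The correctness-of-deletion and semantic-security parts of the final PKE theorem then carry over verbatim from the proof of \Cref{thm:PKE-security}, invoking OWSG correctness and security in place of the OWF properties, and the ``plugging in $X$'' corollary follows from \cite{BK23} exactly as in the OWF case.
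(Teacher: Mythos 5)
Your proposal is correct and follows essentially the same route as the paper, which simply declares the OWSG case ``analogous'' to the one-way-function proofs of \Cref{thm:main} and \Cref{thm:PKE-security} and then invokes \cite{BK23} for the generic compiler; you have in fact spelled out that analogy in more detail than the paper does. The subtlety you flag --- that the quantum $\mathsf{Ver}$ measurement may disturb the verification-key states and could in principle accept for both indices --- is real and is not addressed in the paper, and your canonical-ordering fix handles it adequately.
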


\paragraph{Acknowledgements.} 

AP is partially supported by AFOSR YIP (award number FA9550-16-1-0495), the Institute for Quantum Information and Matter (an NSF Physics Frontiers Center; NSF Grant PHY-1733907) and by a grant from the Simons Foundation (828076, TV).
DK was supported in part by NSF 2112890, NSF CNS-2247727, and DARPA SIEVE. This material is based upon work supported by the Defense Advanced Research Projects Agency through Award HR00112020024.
GM was partially funded by the Deutsche Forschungsgemeinschaft (DFG, German Research Foundation) under Germany's Excellence Strategy - EXC 2092 CASA – 390781972.

\ifsubmission
\bibliographystyle{splncs04}
\else

\bibliographystyle{alpha}
\fi

\addcontentsline{toc}{section}{References}
\bibliography{abbrev3,custom,crypto,cnew}

\begin{thebibliography}{10}
\providecommand{\url}[1]{\texttt{#1}}
\providecommand{\urlprefix}{URL }
\providecommand{\doi}[1]{https://doi.org/#1}

\bibitem{10.1007/978-3-031-30545-0_20}
Agrawal, S., Kitagawa, F., Nishimaki, R., Yamada, S., Yamakawa, T.: Public key
  encryption withÂ secure key leasing. In: Hazay, C., Stam, M. (eds.)
  Advances in Cryptology -- EUROCRYPT 2023. pp. 581--610. Springer Nature
  Switzerland, Cham (2023)

\bibitem{cryptoeprint:2023/325}
Ananth, P., Poremba, A., Vaikuntanathan, V.: Revocable cryptography from
  learning with errors. Cryptology ePrint Archive, Paper 2023/325 (2023),
  \url{https://eprint.iacr.org/2023/325},
  \url{https://eprint.iacr.org/2023/325}

\bibitem{BGGKMRR}
Bartusek, J., Garg, S., Goyal, V., Khurana, D., Malavolta, G., Raizes, J.,
  Roberts, B.: Obfuscation and outsourced computation with certified deletion.
  Cryptology ePrint Archive, Paper 2023/265 (2023),
  \url{https://eprint.iacr.org/2023/265}

\bibitem{BK23}
Bartusek, J., Khurana, D.: Cryptography with certified deletion. In: {Crypto}
  2023 (to appear) (2023)

\bibitem{BKP23}
Bartusek, J., Khurana, D., Poremba, A.: Publicly-verifiable deletion via
  target-collapsing functions. In: {Crypto} 2023 (to appear) (2023)

\bibitem{Broadbent_2020}
Broadbent, A., Islam, R.: Quantum encryption with certified deletion. Lecture
  Notes in Computer Science p. 92–122 (2020).
  \doi{10.1007/978-3-030-64381-2_4},
  \url{http://dx.doi.org/10.1007/978-3-030-64381-2_4}

\bibitem{cryptoeprint:2022/786}
Dall'Agnol, M., Spooner, N.: On the necessity of collapsing. Cryptology ePrint
  Archive, Paper 2022/786 (2022), \url{https://eprint.iacr.org/2022/786},
  \url{https://eprint.iacr.org/2022/786}

\bibitem{balancedOWF}
Haitner, I., Horvitz, O., Katz, J., Koo, C.Y., Morselli, R., Shaltiel, R.:
  Reducing complexity assumptions for statistically-hiding commitment. Journal
  of Cryptology  \textbf{22}(3),  283--310 (2009).
  \doi{10.1007/s00145-007-9012-8},
  \url{https://doi.org/10.1007/s00145-007-9012-8}

\bibitem{hiroka2021quantum}
Hiroka, T., Morimae, T., Nishimaki, R., Yamakawa, T.: Quantum encryption with
  certified deletion, revisited: Public key, attribute-based, and classical
  communication. In: Tibouchi, M., Wang, H. (eds.) Advances in Cryptology -
  {ASIACRYPT} 2021 - 27th International Conference on the Theory and
  Application of Cryptology and Information Security, Singapore, December 6-10,
  2021, Proceedings, Part {I}. Lecture Notes in Computer Science, vol. 13090,
  pp. 606--636. Springer (2021). \doi{10.1007/978-3-030-92062-3\_21},
  \url{https://doi.org/10.1007/978-3-030-92062-3\_21}

\bibitem{cryptoeprint:2022/969}
Hiroka, T., Morimae, T., Nishimaki, R., Yamakawa, T.: Certified everlasting
  functional encryption. Cryptology ePrint Archive, Paper 2022/969 (2022),
  \url{https://eprint.iacr.org/2022/969},
  \url{https://eprint.iacr.org/2022/969}

\bibitem{hiroka2021certified}
Hiroka, T., Morimae, T., Nishimaki, R., Yamakawa, T.: Certified everlasting
  zero-knowledge proof for {QMA}. In: Dodis, Y., Shrimpton, T. (eds.) Advances
  in Cryptology - {CRYPTO} 2022 - 42nd Annual International Cryptology
  Conference, {CRYPTO} 2022, Santa Barbara, CA, USA, August 15-18, 2022,
  Proceedings, Part {I}. Lecture Notes in Computer Science, vol. 13507, pp.
  239--268. Springer (2022). \doi{10.1007/978-3-031-15802-5\_9},
  \url{https://doi.org/10.1007/978-3-031-15802-5\_9}

\bibitem{C:JiLiuSon18}
Ji, Z., Liu, Y.K., Song, F.: Pseudorandom quantum states. In: Shacham, H.,
  Boldyreva, A. (eds.) CRYPTO~2018, Part~III. {LNCS}, vol. 10993, pp. 126--152.
  Springer, Heidelberg (Aug 2018). \doi{10.1007/978-3-319-96878-0_5}

\bibitem{KNY23}
Kitagawa, F., Nishimaki, R., Yamakawa, T.: Publicly verifiable deletion from
  minimal assumptions. Cryptology ePrint Archive, Paper 2023/538 (2023),
  \url{https://eprint.iacr.org/2023/538},
  \url{https://eprint.iacr.org/2023/538}

\bibitem{Kretschmer}
Kretschmer, W.: Quantum pseudorandomness and classical complexity. Schloss
  Dagstuhl - Leibniz-Zentrum für Informatik (2021).
  \doi{10.4230/LIPICS.TQC.2021.2},
  \url{https://drops.dagstuhl.de/opus/volltexte/2021/13997/}

\bibitem{algorithmica}
Kretschmer, W., Qian, L., Sinha, M., Tal, A.: Quantum cryptography in
  algorithmica. In: Saha, B., Servedio, R.A. (eds.) Proceedings of the 55th
  Annual {ACM} Symposium on Theory of Computing, {STOC} 2023, Orlando, FL, USA,
  June 20-23, 2023. pp. 1589--1602. {ACM} (2023).
  \doi{10.1145/3564246.3585225}, \url{https://doi.org/10.1145/3564246.3585225}

\bibitem{MW23}
Malavolta, G., Walter, M.: Non-interactive quantum key distribution. Cryptology
  ePrint Archive, Paper 2023/500 (2023),
  \url{https://eprint.iacr.org/2023/500},
  \url{https://eprint.iacr.org/2023/500}

\bibitem{C:MorYam22}
Morimae, T., Yamakawa, T.: Quantum commitments and signatures without one-way
  functions. pp. 269--295. {LNCS}, Springer, Heidelberg (2022).
  \doi{10.1007/978-3-031-15802-5_10}

\bibitem{Poremba22}
Poremba, A.: Quantum proofs of deletion for learning with errors. In: Kalai,
  Y.T. (ed.) 14th Innovations in Theoretical Computer Science Conference,
  {ITCS} 2023, January 10-13, 2023, MIT, Cambridge, Massachusetts, {USA}.
  LIPIcs, vol.~251, pp. 90:1--90:14. Schloss Dagstuhl - Leibniz-Zentrum
  f{\"{u}}r Informatik (2023). \doi{10.4230/LIPIcs.ITCS.2023.90},
  \url{https://doi.org/10.4230/LIPIcs.ITCS.2023.90}

\bibitem{Unruh2013}
Unruh, D.: Revocable quantum timed-release encryption. J. ACM  \textbf{62}(6)
  (dec 2015). \doi{10.1145/2817206}, \url{https://doi.org/10.1145/2817206}

\bibitem{DBLP:journals/tit/Winter99}
Winter, A.J.: Coding theorem and strong converse for quantum channels. {IEEE}
  Trans. Inf. Theory  \textbf{45}(7),  2481--2485 (1999).
  \doi{10.1109/18.796385}, \url{https://doi.org/10.1109/18.796385}

\end{thebibliography}

\end{document}